\newtheorem{definition}{Definition}
\newtheorem{lemma}{Lemma}
\newcommand{\solver}[1]{\texttt{#1}\xspace}
\begin{document}

\newcommand{\placeholder}[1]{\rule{1pt}{#1}}

\newboolean{shownotes}
\setboolean{shownotes}{false}
\newboolean{showappendix}
\setboolean{showappendix}{true}

\ifshownotes
\newcommand{\fl}[1]{\begin{mdframed}{{\textcolor{red}{Florian: #1}}}\end{mdframed}}
\newcommand{\TODO}[1]{{$\blacksquare$\textcolor{red}{\textbf{TODO}}(#1)}}
\newcommand{\mnote}[1]{\textbf{NOTE}(#1)\marginpar{$\blacksquare$}}
\newcommand{\marginnote}[1]{\marginpar{$\blacksquare$ #1}}
\else
\newcommand{\fl}[1]{}
\newcommand{\TODO}[1]{}
\newcommand{\mnote}[1]{}
\newcommand{\marginnote}[1]{}
\fi

\title{\stitch: Proof Combination for Divide-and-Conquer SAT Solvers\\

\thanks{\textbf{This article will appear in the proceedings of Formal Methods in
Computer-Aided Design (FMCAD 2022).}}
}

\newcommand{\xspacemm}{\ifmmode\else\xspace\fi}
\newcommand{\stitch}{\textit{Proof-Stitch}\xspacemm}
\newcommand{\CL}{\ensuremath{\mathit{CL}_{\mathit{avg}}}}

\author{
  \IEEEauthorblockN{\large Abhishek Nair, Saranyu Chattopadhyay, Haoze Wu, Alex Ozdemir, and Clark Barrett }
  \IEEEauthorblockA{\large Stanford University, Stanford, USA.}
  \IEEEauthorblockA{\large \{aanair, saranyuc, haozewu, aozdemir, barrettc\}@stanford.edu}
}

\maketitle
\begin{abstract}  

With the increasing availability of parallel computing power, there is a growing focus on parallelizing algorithms for important automated reasoning problems such as Boolean satisfiability (SAT).
Divide-and-Conquer (\textit{D\&C}) is a popular parallel SAT solving paradigm that partitions SAT instances into independent sub-problems which are then solved in parallel.
For unsatisfiable instances, state-of-the-art D\&C solvers generate DRAT refutations for each sub-problem. However, they do not generate a single refutation for the original instance. To close this gap,  
we present \stitch, a procedure 
for combining refutations of different sub-problems into a single refutation for the original instance. We prove the correctness of the procedure and propose optimizations to reduce the size and checking time of the combined refutations by invoking existing trimming tools in the proof-combination process. We also provide an extensible implementation of the proposed technique.
Experiments on instances from last year's SAT competition show that the optimized refutations are checkable up to seven times faster than unoptimized refutations.
\end{abstract}

\begin{IEEEkeywords}
Parallel SAT, Divide and Conquer, Refutation Checking
\end{IEEEkeywords}

\section{Introduction}

Boolean satisfiability (SAT) solvers
have improved dramatically in recent years.
They are now regularly
used in a wide variety of application areas
including hardware verification~\cite{hardware},
computational biology~\cite{bio} and decision planning~\cite{planning}.

With the emergence of cloud-computing and improvements in multi-processing hardware,
the availability of parallel computing power has also increased dramatically.
This has naturally led to an increased focus on
parallelizing important algorithms, and SAT is no exception.
There are two traditional approaches to parallel SAT solving - the Divide-and-Conquer (\textit{D\&C})
approach~\cite{blochinger2003parallel,hyvarinen2010partitioning,hyvarinen2006distribution}
and the portfolio approach~\cite{xu2008satzilla}.
In the D\&C approach, the
original SAT instance is partitioned into independent sub-problems to be solved
in parallel,
while in the portfolio approach multiple SAT solvers are
independently run on the original instance. 
Although the portfolio approach in combination with clause sharing performs well for small portfolio sizes, the D\&C approach scales better in
environments with large parallel computing power such as the cloud.
Several
implementations of D\&C solvers exist~\cite{blochinger2003parallel,hyvarinen2010partitioning,hyvarinen2006distribution,ggsatOWC}.
Every implementation uses:
a \textit{divider} to split up the original instance into sub-problems,
and a \textit{base SAT solver} to solve the independent sub-problems. 
For example, \texttt{ggSAT}~\cite{ggsatOWC} uses CadiCaL~\cite{cadical} as its base
solver. 

If a SAT problem is unsatisfiable,
a proof of unsatisfiability (or \textit{refutation})
can be produced and independently checked
to validate the result.
Since 2013, the annual SAT competition has required
SAT solvers
to generate refutations.
The most commonly supported refutation format today is the DRAT format~\cite{heule2015proofs}.
Existing
D\&C SAT solvers produce refutations for each
sub-problem independently.
%
%
However, even if the refutation for each sub-problem passes the proof-checker, this is not a formal guarantee that the original instance also admits a refutation, as there could have been an error in the partitioning strategy. 
For example, a buggy solver may incompletely partition the SAT instance $(\neg \ell_1) \wedge (\ell_2 \vee \ell_3) \wedge (\neg \ell_2 \vee \ell_3)$ into sub-problems with cubes $\ell_1$ and $\lnot \ell_2$.  Both of these sub-problems are unsatisfiable, even though the instance is satisfiable. Transient errors in the underlying distributed system may also cause sub-problem refutations to be truncated or missing.
To address these challenges, we introduce \stitch, which implements a strategy for combining DRAT refutations for sub-problems into a single refutation for the original instance, a process we call \emph{refutation stitching}.
%
Our contributions are:
\begin{itemize}
  
  \item We describe an algorithm for combining DRAT refutations of partitions of problems into a single refutation for the original problem and provide an open-source implementation  on GitHub~\cite{github}.

  \item We describe an optimization technique leveraging existing trimming tools (e.g., \emph{drat-trim}~\cite{wetzler2014drat}) to improve the quality of the combined refutations.
  
  \item We evaluate our implementation on benchmarks from last year's SAT competition~\cite{sat2021}. Our results show that trimmed refutations are checkable up to seven times faster than untrimmed refutations.

\end{itemize}

The rest of this paper is organized as follows. 
Section \ref{sec:related_work}
discusses background and related work.
Section \ref{sec:methodology} presents the \stitch
algorithm and theoretically justifies our method of combining refutations.
We also describe an optimization technique that reduces the checking time and the size of the combined refutations.
Section \ref{sec:implementation} details our tool implementation.
Results are presented
in Section \ref{sec:results}, and Section \ref{sec:conc} concludes.

\section{Background and Related Work}
\label{sec:related_work}

\subsection{Propositional refutations}
We assume familiarity with the basic concepts of CDCL SAT algorithms (see, e.g.,
\cite{sathandbook}).
We also assume that a base SAT solver can produce a \emph{DRAT} refutation, which we define below (following~\cite{heule2010clause}).

Throughout the paper
we model clauses as \textit{sets} of literals
and formulas as \textit{multisets} of clauses.
By $\cdot \cup \cdot$,
we denote the standard union operation on sets,
and the multiplicity-summing union on multisets.
%

%
Let $F = \{C_1, \dots, C_n\}$ be a formula.
$F$ \textit{unit propagates on $\ell$} to $F' = \{C\setminus\{\lnot\ell\} : C \in
F, \ell\not\in C \} \cup \{\ell\}$ (written $F \to_\ell F'$)
if there exists a clause
$\{\ell, \ell_1, \dots, \ell_k\} \in F$
such that $\{\lnot\ell_i\} \in F$ for
$i \in [1,k]$.
If $F \to_\ell F'$ for some $\ell$, then $F \to F'$.
We say that $F\to\bot$ if $F$ contains an empty clause.
Let the relation $\to^*$ denote the reflexive, transitive closure of $\to$.
We say that $F\mapsto F'$
when $F\to^*F'$ and there is no $F'' \ne F'$ such that $F'\to F''$.
One can show that the $\mapsto$ relation is a function.
We say that $C = \{\ell_1, \dots, \ell_k\}$
has \textit{asymmetric tautology} (AT)
with respect to $F$ if $F \cup \{\lnot \ell_1\} \cup \dots \cup \{\lnot \ell_k\}
\mapsto \bot$.
We say that $C$
has \textit{resolution asymmetric tautology} (RAT)
with respect to literal $\ell_1 \in C$ and $F$
if for all $C' \in F$ containing $\lnot \ell_1$,
$C \cup (C' \setminus \{\lnot \ell_1\})$
has AT.
%

Let $o_i$ denote an operation.
Consider a sequence of operation-clause pairs $\pi = ((o_1, C_1), \dots,
(o_m, C_m))$, where 
each $o_i$ indicates either the addition ($\oplus$) or deletion ($\ominus$) of a clause from a formula.

Let $\phi$ denote a CNF formula.
Define
$\phi_i$ recursively: $\phi_0 = \phi$,
and
$\phi_{i+1}$ is $\phi_i \cup \{C_{i+1}\}$ when $o_{i+1}$ is $\oplus$,
or $\phi_i \setminus \{C_{i+1}\}$ otherwise.
The sequence $\pi$
is a \textit{DRAT refutation}
of $\phi$
if when $o_{i+1}=\oplus$ then $C_{i+1}$
has RAT with respect to
$\phi_{i}$,
and if the last element in $\pi$ is $(\oplus, \emptyset)$.

\subsection{Divide-and-Conquer SAT solving}
One parallel SAT solving paradigm is \textit{Divide-and-Conquer}:
a SAT instance is divided into simpler SAT instances (sub-problems),
which are then solved in parallel. Typically, the sub-problems represent partitions of the search space,
such that the disjunction of all the sub-problems is equisatisfiable with the
original problem. 
The sub-problems are derived from the original instance by assigning Boolean values to literals.
The set of literals that are assigned (decided) for a particular sub-problem is called
the \textit{cube} of the sub-problem and the number of literals in the cube is the \textit{depth} of the sub-problem.
There are many D\&C-based
solvers~\cite{blochinger2003parallel,hyvarinen2010partitioning,hyvarinen2006distribution},
including:
\solver{Psato}~\cite{zhang1996psato},
\solver{Painless}~\cite{le2019modular},
and \solver{AMPHAROS}~\cite{nejati2017propagation}.
One prominent D\&C approach, Cube-and-Conquer~\cite{DBLP:conf/hvc/HeuleKWB11},
uses a lookahead solver to divide instances and a CDCL solver to solve
sub-problems.
This approach has been
successful for large mathematical problems~\cite{DBLP:conf/sat/HeuleKM16}
and is implemented by tools such as
\solver{Paracooba}~\cite{heisinger2020parac}
and \solver{gg-sat}~\cite{ggsatOWC}.

D\&C SAT solvers generate separate DRAT refutations for each sub-problem.
There has been little work
on combining these refutations
into a single refutation for the original instance.
One work~\cite{heule2015compositional}
considers proof composition,
but its parallel composition rule does not apply to DRAT refutations.
Another work~\cite{philipp2016unsatisfiability} gives an alternate proof
calculus for parallel solvers.

\section{Methodology}
\label{sec:methodology}
In this section,
we present an algorithm to combine sub-problem refutations into a refutation for the original Boolean instance.
Then we show the algorithm's correctness.
Finally, we present a technique to optimize the combined refutations.

\subsection{Algorithm}
The first step in the \stitch algorithm is to construct a decision tree representing the steps taken by the D\&C solver. The root of the tree represents the original instance, and the leaves represent the sub-problems. Figure \ref{fig:decision} shows the decision tree for an example instance.

\begin{algorithm}
\caption{Stitching algorithm}
\label{alg:stitch}
  \SetKwInOut{Input}{In}
  \SetKwInOut{Output}{Out}
  \Input{%
    Instance: $\phi$,\\
    Decision literal: $x$,\\
    Refutations of:\\
    $\phi \cup \{\{x\}\}$: $\;\; \pi = ((o_1, C_1), \dots, (o_n, C_{n}))$,\\
    $\phi \cup \{\{\lnot x\}\}$: $\pi' = ((o'_1, C'_1), \dots, (o'_m, C'_{m}))$,\\
  }
  \Output{Refutation of $\phi$}
  \textbf{procedure} \emph{stitching} $(\phi, x, \pi, \pi')$
  
  \Return 
\begin{align*}
  \Big(\;(o_1, C_1 \cup \{\lnot x\}),\; \dots,\; (o_n, C_n\cup\{\lnot x\}),\\
  (o'_1, C'_1 \cup \{x\}),\; \dots,\; (o'_m, C'_m\cup\{x\}) &,\;
  (\oplus, \emptyset)\;\Big)
\end{align*}
\end{algorithm}

Next, \stitch performs a sequence of \emph{stitching} operations to produce a single refutation for the original SAT instance. A stitching operation (Algorithm \ref{alg:stitch}) reads in a SAT instance $\phi$, a decision variable $x$ and two refutations $\pi$ and $\pi'$ corresponding to the sub-problems $\phi \cup \{\{x\}\}$ and $\phi \cup \{\{\lnot x\}\}$ respectively.
It produces a single refutation corresponding to the instance $\phi$. The refutation for instance $\phi$ contains the clauses
from refutation $\pi$ appended with the literal $\lnot x$ and the clauses from refutation $\pi'$ appended with the literal $x$.  More generally, the clauses from a refutation are appended with the negation of the decision literal used to generate the sub-problem.
Figure \ref{fig:example} illustrates the stitching operation.

As an example of the proof combination process, consider Figure \ref{fig:stitching}. First the refutations $\pi_{00}$ and $\pi_{01}$ are combined. Then $\pi_{10}$ and $\pi_{11}$ are combined, and finally, $\pi_{0}$ and $\pi_{1}$ are combined to produce the refutation $\pi$ corresponding to the original instance. 
In \stitch, the stitching operations are ordered according to the following rule: A stitching operation to combine a pair of refutations $\pi$ and $\pi'$ can only occur after all refutations with greater depth have been combined. 
Informally, this means that refutations are combined in decreasing order of their depth, as shown in Figure \ref{fig:stitching}. Stitching operations at the same depth are independent and can occur in parallel.

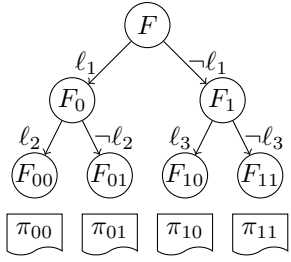
\begin{figure}[t]
    \centering
    \begin{tikzpicture}
  \begin{scope}
    [
     level distance = 10mm,
     level 1/.style={sibling distance = 20mm},
     level 2/.style={sibling distance = 10mm},
     ->,
     inner sep = 0mm,
     minimum width = 6mm,
    ]
    \path[]
      node[circle,draw] {$F$}
      child {
        node[circle,draw] {$F_{0}$}
        child {
          node[circle,draw] (F00) {$F_{00}$}
          edge from parent node[left] {$\ell_2$}
        }
        child {
          node[circle,draw] (F01) {$F_{01}$}
          edge from parent node[right] {$\lnot\ell_2$}
        }
        edge from parent node[left] {$\ell_1$}
      }
      child {
        node[circle,draw] {$F_{1}$}
        child {
          node[circle,draw] (F10) {$F_{10}$}
          edge from parent node[left] {$\ell_3$}
        }
        child {
          node[circle,draw] (F11) {$F_{11}$}
          edge from parent node[right] {$\lnot\ell_3$}
        }
        edge from parent node[right] {$\lnot\ell_1$}
      }
    ;
  \end{scope}
  \begin{scope}
    [
     tape,
     tape bend top = none,
     tape bend height = 3pt
    ]
    \path
      (F00.south) node[draw, below = 2mm] {$\pi_{00}$}
      (F01.south) node[draw, below = 2mm] {$\pi_{01}$}
      (F10.south) node[draw, below = 2mm] {$\pi_{10}$}
      (F11.south) node[draw, below = 2mm] {$\pi_{11}$}
      ;
  \end{scope}
\end{tikzpicture}
    \caption{Decision tree of an example unsatisfiable SAT instance.}
    \label{fig:decision}
\end{figure}

\begin{figure}[t]
    \centering
    \begin{tikzpicture}
  [tape bend top = none,
   tape bend height = 6pt,
   shorten >=2mm,
   shorten <=2mm,
   refutation/.style={
     minimum width = 7mm,
     tape,
     draw,
     below = 2mm,
     align = center
   },
   ->
  ]
  \path
    node[refutation, align=center]
    (pf1)
    {
      $\{\ell_1, \ell_2, \ell_3\}$
      \\
      $\{\ell_2, \ell_5\}$
      \\
      $\{\ell_4, \ell_5\}$
      \\
      $\{\}$
    }
    (2.9,0)
    node[refutation, align=center]
    (pf2)
    {
      $\{\ell_1, \ell_2, \ell_3, \lnot\ell_7\}$
      \\
      $\{\ell_2, \ell_5, \lnot\ell_7\}$
      \\
      $\{\ell_4, \ell_5, \lnot\ell_7\}$
      \\
      $\{\lnot\ell_7\}$
    }
    (0,-2.5)
    node[refutation, align=center]
    (pf3)
    {
      $\{\ell_4, \ell_2\}$
      \\
      $\{\ell_3, \ell_5\}$
      \\
      $\{\}$
    }
    (2.9,-2.5)
    node[refutation, align=center]
    (pf4)
    {
      $\{\ell_4, \ell_2, \ell_7\}$
      \\
      $\{\ell_3, \ell_5, \ell_7\}$
      \\
      $\{\ell_7\}$
    }
    (6.0,0)
    node[refutation, align=center]
    (pf5)
    {
      $\{\ell_1, \ell_2, \ell_3, \lnot\ell_7\}$
      \\
      $\{\ell_2, \ell_5, \lnot\ell_7\}$
      \\
      $\{\ell_4, \ell_5, \lnot\ell_7\}$
      \\
      $\{\lnot\ell_7\}$
      \\
      $\{\ell_4, \ell_2, \ell_7\}$
      \\
      $\{\ell_3, \ell_5, \ell_7\}$
      \\
      $\{\ell_7\}$
      \\
      $\{\}$
    }
    (pf1.east) edge[->, line width = 2pt] node[above] {$\lnot\ell_7$} (pf2.west)
    (pf3.east) edge[->, line width = 2pt] node[above] {$\ell_7$} (pf4.west)
    (pf2.east) edge[->, line width = 2pt] node[above] {} (pf5.west)
    (pf4.east) edge[->, line width = 2pt] node[above] {} (pf5.west)
    ;
\end{tikzpicture}
    \caption{Stitching operation on example refutations}
    \label{fig:example}
\end{figure}
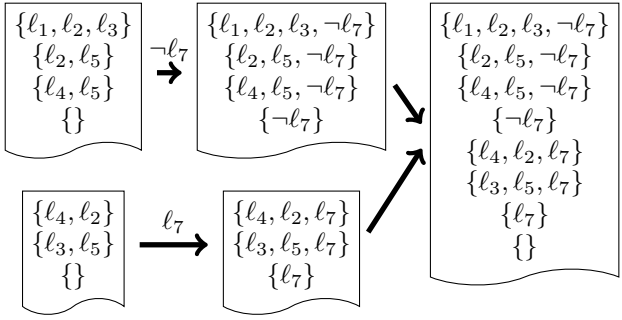

\begin{figure*}[t]
    \centering
    \includegraphics[width=0.8\textwidth]{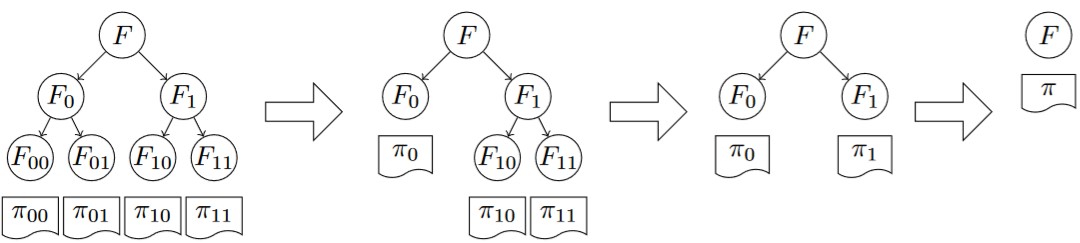}
    \caption{Refutation stitching process for the SAT instance shown in Figure \ref{fig:decision}. The decision literals are omitted.}
    \label{fig:stitching}
\end{figure*}

\subsection{Justification for the stitching operation}
\label{sec:justify}

We now show that Algorithm~\ref{alg:stitch} is correct:
given suitable inputs,
it produces a DRAT refutation for $\phi$.

\begin{definition}
  A DRAT refutation $\pi$ is \textbf{preserving} if
  for all $C$, $(\ominus, C)$ occurs at most as many times in $\pi$
  as $(\oplus, C)$.
  \label{defn:preserving}
\end{definition}

\begin{lemma}
    Let $\phi$ be a CNF formula,
    $x$ be a variable,
    and $\pi$ and $\pi'$
    be preserving DRAT refutations
    of $\phi \cup \{\{x\}\}$
    and $\phi \cup \{\{\lnot x\}\}$ respectively.
    Then, stitching$(\phi, x, \pi, \pi')$
    outputs a preserving DRAT refutation of $\phi$.
    \label{lemma:cor}
\end{lemma}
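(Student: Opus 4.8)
The plan is to verify directly that the output sequence
$$\pi'' = \big((o_1, C_1 \cup \{\lnot x\}), \dots, (o_n, C_n \cup \{\lnot x\}), (o'_1, C'_1 \cup \{x\}), \dots, (o'_m, C'_m \cup \{x\}), (\oplus, \emptyset)\big)$$
satisfies the definition of a DRAT refutation of $\phi$, and separately that it is preserving. I would track the ``formula state'' $\psi_i$ that $\pi''$ induces starting from $\phi$, and show that it mirrors the state $\phi_i$ induced by $\pi$ on $\phi \cup \{\{x\}\}$ (respectively $\phi'_i$ induced by $\pi'$ on $\phi \cup \{\{\lnot x\}\}$), but with the extra literal $\lnot x$ (resp. $x$) glued onto each added/deleted clause. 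The key bookkeeping claim is: after processing the first $i$ steps, $\psi_i = \phi \cup \{\,C \cup \{\lnot x\} : C \in (\phi_i \setminus \phi)\,\}$ roughly speaking — i.e. the ``new'' clauses of the first block carry a $\lnot x$, and the original clauses of $\phi$ are untouched. Here is where the \emph{preserving} hypothesis does its work: without it, a spurious deletion $(\ominus, C_j)$ with $C_j \in \phi$ could, after gluing, fail to match any clause actually present (since $\phi$'s clauses are not modified), or worse, $\pi$ might delete an original clause of $\phi$ and we would need that clause to reappear. Preservation guarantees deletions are ``covered'' by prior additions, so the glued deletions in $\pi''$ always target clauses that the glued additions already introduced, keeping the state-correspondence exact.

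**The two RAT-preservation lemmas.** The technical heart is showing that gluing a fresh literal onto every clause preserves the RAT property relative to the correspondingly-glued formula. Concretely I expect to need: if $C$ has RAT w.r.t.\ $F$ (on pivot $\ell_1 \in C$), and $y$ is a variable not occurring in $F$ or $C$, then $C \cup \{y\}$ has RAT w.r.t.\ $F \cup \{\{\bar y\}\}$ and also w.r.t.\ any $F^\star$ of the form $F_{\text{orig}} \cup \{\{\bar y\}\} \cup \{D \cup \{y\} : D \in F_{\text{new}}\}$ where $F = F_{\text{orig}} \cup F_{\text{new}}$. The intuition is that unit propagation in the glued formula proceeds in lockstep with unit propagation in $F$: the added clause $\{\lnot x\}$ (in the first block) immediately propagates $\lnot x$, which then strips the $\lnot x$ literal off every glued clause, recovering exactly the propagation behavior of the unglued proof; and the negated-literal unit assumptions $\{\lnot \ell_i\}$ used to test AT behave identically. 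For the second block the subtlety is that the original $\phi$-clauses are present without a $\{\lnot x\}$ unit to neutralize them — but they are also present in $\phi \cup \{\{\lnot x\}\}$, which is exactly the starting formula of $\pi'$, so again the correspondence is one-to-one once we note the glued clauses from the first block (all containing $\lnot x$) become satisfied/irrelevant once $x$ is propagated by the $\{x\}$ that $\pi'$'s first step effectively sees. I would isolate this as a sub-lemma about how $\to_\ell$ and $\mapsto$ interact with the gluing operation.

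**Finishing with the empty clause.** After the two blocks, the state contains (the glued image of) the empty clause from $\pi$, namely $\{\lnot x\}$, and the glued image of the empty clause from $\pi'$, namely $\{x\}$ — together with the original $\phi$. Then the final step $(\oplus, \emptyset)$ must have RAT (indeed AT) w.r.t.\ this state: assuming nothing (the empty clause has no literals to negate), we must show the formula unit-propagates to $\bot$. That is immediate: $\{\lnot x\} \to_{\lnot x}$ and $\{x\} \to_x$ are both available, and propagating one then the other (or noting $\{x\}$ with $\{\lnot x\}$ already present yields the empty clause) drives the state to contain $\bot$. Hence $\emptyset$ has AT, so RAT, and $\pi''$ ends in $(\oplus, \emptyset)$, completing the DRAT-refutation check. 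Finally, preservation of $\pi''$: gluing is injective on clauses, so multiplicities of $(\oplus, C\cup\{\lnot x\})$ vs $(\ominus, C\cup\{\lnot x\})$ match those in $\pi$ (likewise for the $x$-block), the two blocks use disjoint clause sets (every first-block clause contains $\lnot x$, every second-block clause contains $x$), and the lone extra step $(\oplus,\emptyset)$ has no matching deletion — so the preserving inequality holds componentwise.

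**Main obstacle.** The delicate part is the RAT-preservation sub-lemma under gluing, specifically handling \emph{deletions} and making the state-correspondence invariant precise enough that the inductive step goes through; this is exactly where the \emph{preserving} hypothesis must be invoked, and getting the bookkeeping right — so that a glued deletion in $\pi''$ always refers to a clause genuinely present and never accidentally removes an original $\phi$-clause or a clause needed later — is the crux. A secondary care-point is confirming the pivot literal $\ell_1$ used for RAT in $\pi$ is still a valid pivot after gluing (it is, since $\ell_1 \in C \subseteq C \cup \{\lnot x\}$ and $\ell_1 \neq \lnot x$ as $x$ is fresh), and that no clause $C' \in F$ spuriously gains $\lnot \ell_1$ from the gluing (it cannot, again by freshness of $x$).
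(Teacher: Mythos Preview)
Your overall strategy matches the paper's: track the intermediate formula states induced by the stitched sequence, show that each glued addition retains RAT by first propagating on the negation of the glued literal (which collapses the glued state back to the corresponding state of the original sub-refutation), invoke the preserving hypothesis to guarantee the original $\phi$-clauses are never removed so the state correspondence stays exact, and finish by observing that the two glued empty clauses $\{\lnot x\}$ and $\{x\}$ yield AT for the final $(\oplus,\emptyset)$. Your preservation argument for $\pi''$ is also the paper's.

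There is, however, a genuine gap in your RAT-preservation sub-lemma: you assume the glued variable is \emph{fresh} (``$y$ a variable not occurring in $F$ or $C$''), and you later rely on this to conclude $\ell_1\neq\lnot x$ and that gluing creates no new occurrences of $\lnot\ell_1$. In the D\&C setting $x$ is a decision variable of $\phi$ and certainly occurs in $\phi$, in $\psi_j$, and possibly in the proof clauses $C_j$; freshness is simply false. Fortunately the conclusion survives without it. If the RAT pivot $\ell_1$ equals $x$, the glued clause $C_{j+1}\cup\{\lnot x\}$ is a tautology and trivially RAT; if $\ell_1=\lnot x$, the pivot is still present and the argument proceeds unchanged. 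For any other pivot, the paper argues directly: every clause $C^*\in\phi_i$ containing $\lnot\ell$ is either an original $\phi$-clause (hence in $\psi_j$, by preservation) or has the form $C\cup\{\lnot x\}$ with $C\in\psi_j$; in both cases the resolvent with $C^*_{i+1}$ contains $\lnot x$, so its AT check in $\phi_i$ begins by adding the unit $\{x\}$, and one propagation on $x$ reduces to exactly the AT check already known to succeed in $\psi_j$. Drop the freshness hypothesis and argue this way and your proof goes through. (Two minor slips: your $F^\star$ should not contain the unit $\{\bar y\}$---that unit enters only inside the AT check, not in the ambient formula---and the informal description has the sign reversed: checking a clause containing $\lnot x$ adds the unit $\{x\}$ and propagates $x$, not $\lnot x$.)
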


\begin{proof}
Let $\pi^*$ be the output of \emph{stitching}.
Let
$\pi = ((o_1, C_1), \dots, (o_n, C_n))$
and
$\pi' = ((o'_1, C'_1), \dots, (o'_{n'}, C'_{n'}))$.
Let $\psi = \phi \cup \{\{x\}\}$
and
$\psi' = \phi \cup \{\{\lnot x\}\}$.
Define $\psi_i$ recursively, by
$\psi_0 = \psi$
and $\psi_{i+1} = \psi_i \cup \{C_{i+1}\}$
when $o_{i+1}$ is an addition,
and $\psi_{i+1} = \psi_i \setminus \{C_{i+1}\}$ otherwise.
Define $\psi'_i$ (respectively $\phi_i$) analogously,
based on formula $\psi'$ (resp. $\phi$) and refutation $\pi'$ (resp. $\pi^*$).

By construction, $\pi^*$'s final step is $(\oplus, \emptyset)$.
Moreover, since $\pi$ and $\pi'$ are preserving and formulas are clause
\textit{multisets}, $\pi^*$ is preserving.
Thus, our main task is to show that
each addition $(\oplus, C^*_{i+1})$ in $\pi^*$
has RAT with respect to $\phi_i$.
$C^*_{i+1}$ is either derived from a clause in $\pi$,
derived from a clause in $\pi'$,
or is the final empty clause.
We begin with the first case:
$C^*_{i+1} = C_{j+1} \cup \{\lnot x\}$.

First, we show that if $C_{j+1}$ has AT with respect to
$\psi_j$, then $C^*_{i+1}$ has AT with respect
to $\phi_i$.
Note that $\psi_j \cup \{\{\lnot \ell_1\}, \dots, \{\lnot \ell_k\}\}
= F'\cup\{\{x\}\} \cup\{\{\lnot \ell_1\}, \dots, \{\lnot \ell_k\}\}
\to_x F''\cup\{\{x\}\} \cup\{\{\lnot \ell_1\}, \dots, \{\lnot \ell_k\}\} \mapsto\bot $.
Now, consider $F''' = \phi_i \cup \{\{x\}, \{\lnot \ell_1\}, \dots, \{\lnot \ell_k\}\}$.
If $F''' \mapsto \bot$, then $C^*_{i+1}$ has the desired property.
Observe that $F''' \to_x F''\cup\{\{x\}\} \cup\{\{\ell_1\}, \dots, \{\ell_k\}\}$;
thus, since the latter propagates to bottom, $F'''$ does too.

Second, we show that if $C_{j+1}$ has RAT with respect to
literal $\ell$ and formula $\psi_j$,
then $C^*_{i+1} = \{\lnot x\}\cup C_{j+1}$ has RAT with respect to literal $\ell$ and formula $\phi_i$.
Let $C^*$ be a clause in $\phi_i$ that contains $\lnot \ell$.
If $C^*_{i+1} \cup (C^* \setminus \{\lnot \ell\})$ has AT
with respect to $\phi_i$, we are done.
Since $C^*$ is a clause in $\phi_i$,
there is some $C$ in $\psi_j$
such that $C \cup \{\lnot x\} = C^*$ or $C=C^*$.
Thus, 
$C^*_{i+1} \cup (C^* \setminus \{\lnot \ell\}) =
\{\lnot x\} \cup C_{j+1} \cup (C \setminus \{\lnot\ell\})$.
Let $\lnot x, \ell_1, \dots, \ell_k$ be the literals of this clause.
As before, since
$\psi_j \cup \{\{\lnot \ell_1\}, \dots, \{\lnot \ell_k\}\}$ unit propagates to bottom,
$\phi_i \cup \{\{x\}, \{\lnot \ell_1\}, \dots, \{\lnot \ell_k\}\}$ does too.

In the case
that
$C^*_{i+1} = C'_{j+1} \cup \{x\}$
(i.e., $C^*_{i+1}$ is derived from $\pi'$),
the argument is similar.
The key insight is that an initial propagation on $\lnot x$ in any AT check
removes all the clauses added by $\pi$.
Since $\pi$ deletes no clauses from the original formula,
this leaves an intermediate propagation result that shows $C'_{j+1}$ is RAT.

The final step in $\pi^*$ is $(\oplus, \emptyset)$.
It has AT because $\phi_{n+m}$ contains both $\{x\}$ and $\{\lnot x\}$.
Since $\pi^*$'s added clauses all have the AT or RAT properties, and the final step adds an empty clause,
$\pi^*$ is a valid DRAT refutation of $\phi$.

\end{proof}

 In \stitch, the final refutation is built through stitching operations on DRAT refutations of the sub-problems. Since each stitching operation produces a preserving DRAT refutation, recursive application of Lemma \ref{lemma:cor} proves that the final refutation is a valid DRAT refutation of the original instance.

\subsection{Optimization}

Empirically, we have observed that refutations created through stitching operations contain a large number of clauses that are not needed during validation ("redundant" clauses).
Identifying and removing these clauses reduces the time required to check the refutation and the storage space required to save the refutation.
One approach to remove such redundant clauses is by identifying the "unsatisfiable core" as described in~\cite{goldberg2003verification}.
This approach optimizes the refutation by only retaining clauses that are essential for validation by a proof-checker.
Our implementation optimizes refutations by using \emph{drat-trim} to extract the unsatisfiable core after every stitching operation.

However, aggressively invoking the optimization technique (e.g., after every stitching operation) could incur significant run-time overhead in the refutation generation process. This calls for a heuristic to decide when to apply the optimization technique.
%
%
%
Empirically we observe that refutations with larger clauses (more literals) require longer to check.
We hypothesize that this occurs because larger clauses are less likely to contribute to unit-propagation while
simultaneously consuming more memory in the cache of the refutation checker.
Therefore, optimizing refutations with large clauses should yield the greatest benefit.
To implement this, we introduce a threshold parameter \CL.
After each stitching step,
the refutation is optimized only if the average clause length in the refutation is greater than \CL.
%

\section{Implementation}
\label{sec:implementation}
In this section, we describe our implementation of the \stitch algorithm. \stitch is implemented in Python and uses \emph{drat-trim}~\cite{wetzler2014drat} to optimize refutations. 
Our tool comprises of just under 300 lines of Python code and is available on GitHub~\cite{github}.

The tool inputs are the original SAT instance in CNF form, the refutations and cubes for each sub-problem,
and the threshold value $\CL$.
Our implementation requires that the cube of each sub-problem be encoded in the name of the corresponding refutation file.
For example, the refutation file corresponding to refutation $\pi_{00}$ in Figure \ref{fig:decision} is named $\ell_1\_\ell_2.\mathit{proof}$.
The output is a single file containing a refutation of the original instance.
As noted in section \ref{sec:methodology}, stitching operations at the same depth of the decision tree are independent and their combined refutations can be optimized in parallel. Our tool supports this.
Setting the parameter $\CL = 0$ enables optimization after every stitching operation and $\CL = -1$ turns off optimization (only stitching is performed). We denote refutations combined with $\CL=0$ as "fully optimized" and refutations combined with $\CL=-1$ as "unoptimized".

\section{Experiments}
\label{sec:results}
%
To evaluate \stitch, we run it on six benchmarks from the parallel track of last year's SAT competition~\cite{sat2021}.
The chosen benchmarks can be solved by \solver{Paracooba}~\cite{heisinger2020parac} within 1 minute of run-time.
We also attempted running the tool on harder instances from the parallel track. 
While unoptimized proofs can be produced quickly (within a few minutes) on those instances, proof-checking and optimization are both computationally prohibitive due to the limitation of the underlying proof-checker (e.g., \emph{drat-trim} fails  to validate the combined refutations on harder instances even with a 24 hour time limit).
For large refutations, the proof-checker faces memory and run-time bottlenecks on almost all the intermediate optimization steps.
Therefore, we do not consider harder instances in our evaluation, but note that the proposed techniques in principle apply to larger instances once the scalability of the underlying proof-checker improves. 



In our experiments, we compare the checking time and size of unoptimized refutations against fully optimized refutations to show the benefit of optimization. 
We also report the tool run-time to demonstrate that \stitch does not introduce unacceptable overheads.
Finally, we analyze the average checking time and tool run-time for $\CL = 10$, a value empirically determined to perform well.
We perform our evaluation on an Intel Xeon E5-2640 v3 machine with 128 GBytes of DRAM and 16 cores.

Table \ref{Tab:expts} shows the time required for \emph{drat-trim} to check the final refutations for the benchmarks ($T_c$), tool execution time to  combine refutations ($T_g$), and the size of the combined refutations ($S_g$). The time required to check refutations reduces by between $(2.7-7)\times$ for all the benchmarks when full optimization is performed.
Full optimization also results in smaller refutation file sizes, but increases the tool run-time.

\begin{table}[t]
\centering
\begin{footnotesize}
\caption{%
Refutation checking time ($T_c$) (s),
tool run-time ($T_g$) (s),
and size of refutation file ($S_g$) (MB)
for six benchmarks from last year's SAT competition~\cite{sat2021} \label{Tab:expts}}
\setlength{\tabcolsep}{4pt}
\begin{tabular}{lcccccc}
\hline
\multirow{2}{*}{Benchmarks} & \multicolumn{3}{c}{Un-optimized} & \multicolumn{3}{c}{Fully Optimized} \\
                            & 
                            $T_c$(s)  & $T_g$(s)
                         & $S_g$(MB) & $T_c$(s)  & $T_g$(s) & $S_g$(MB) \\
\cmidrule(lr){1-1} \cmidrule(lr){2-4} \cmidrule(lr){5-7}
p01\_lb\_05                 & 987       & 271       & 1700    & 141       & 686      & 184    \\
ktf\_TF-4.tf\_2\_0.02\_18
& 212       & 78        & 385     & 76        & 600      & 77     \\
satch2ways12u               & 1370      & 275       & 1600    & 272       & 836      & 655    \\
pb\_300\_10\_lb\_06         & 163       & 107       & 536     & 36        & 459      & 27     \\
mp1-Nb6T06                  & 241       & 106       & 586     & 44        & 201      & 222    \\
E02F17                      & 417       & 223       & 1500    & 112       & 467      & 294    \\
\hline
\end{tabular}
\end{footnotesize}
\end{table}

Figure \ref{fig:result} compares the average run-time to combine refutations (denoted ``merging'' time) and the average run-time to check refutations for unoptimized, $\CL=10$, and fully optimized refutations.
%
Interestingly, running our tool with $\CL=10$ \emph{decreases} the total validation time (merging + checking) compared to the unoptimized case. 
This points to the benefit of optimizing refutations in parallel---the overhead associated with optimizing refutations can be amortized by the savings in refutation checking time.
Another important observation is that setting $CL_{avg}=10$ reduces the time required to combine refutations compared to the unoptimized case.
We believe the reason is as follows: optimizing refutations decreases their size. When $\CL=10$, we optimize all intermediate refutations with average clause length greater than 10.  Since the intermediate refutations are now smaller, the next stitching operation on this refutation takes lesser time. The time spent in optimizing refutations is mitigated by the savings in stitching time. 


\begin{figure}[t]
    \centering
    \includegraphics[width=0.9\columnwidth]{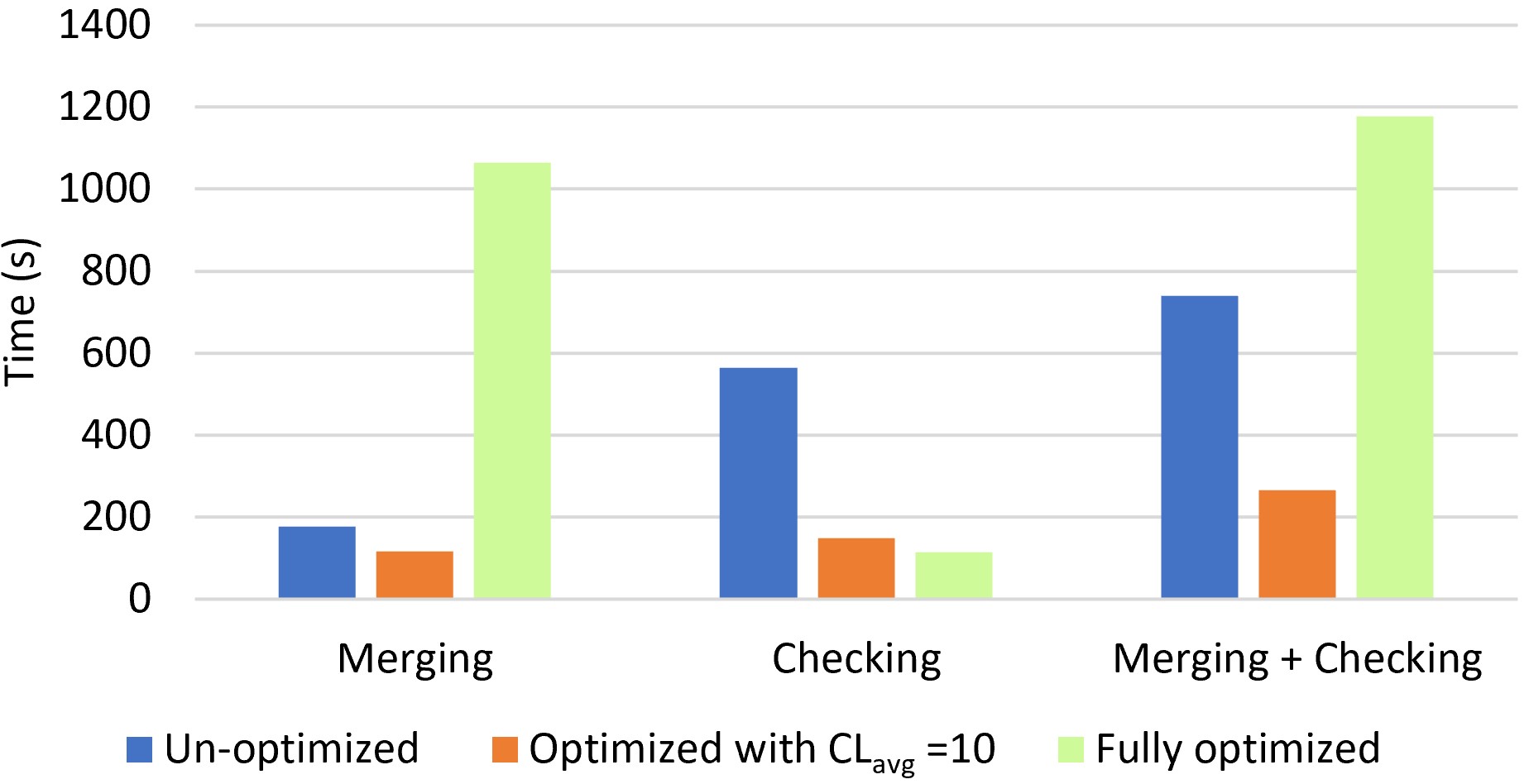}
    \caption{Average merging time and refutation checking time when the refutations are not optimized, optimized with $\CL=10$ and fully optimized }
    \label{fig:result}
\end{figure}

\section{Conclusion}
\label{sec:conc}
We have presented \stitch, a technique that complements Divide-and-Conquer SAT solvers by combining sub-problem refutations into a single refutation for the original instance.
\stitch also uses existing proof-trimming tools to optimize the combined refutation.

\textit{Future Work:} \stitch's run-time overhead can be reduced by performing more stitching operations in parallel.
Currently, only stitching operations at the same tree depth are parallelized, while in principle, any two independent stitching operations
could be parallelized.
Another potential future direction would be to incorporate parallelism in the refutation checker itself, likely requiring extension of the DRAT format to incorporate structural information of the search tree. 
Finally, it would be 
interesting to evaluate alternative measures for guiding the optimization process, such as Literal Block Distance~\cite{lbd}, and to look into additional ways to reduce refutation sizes.

\textit{Acknowledgement:}
This work began as a course project
for Caroline Trippel's CS357S (Fall 2021) at Stanford University.

\clearpage
\bibliographystyle{IEEEtran.bst}
\bibliography{references}

\end{document}